\crefname{hypothesis}{Hypothesis}{Hypotheses}
\title{A fast tunable blurring algorithm \\
for scattered data \thanks{Submitted to the editors June 16, 2019.
\funding{This work was funded by the National Science Foundation under award no.~1821074.}}}
\author{Gregor Robinson \&
  Ian Grooms
  \thanks{Department of Applied Mathematics, University of Colorado Boulder, Boulder, CO
  (\email{gregor.robinson@colorado.edu}, \email{ian.grooms@colorado.edu}).}}
\DeclareMathOperator{\vspan}{span}
\newcommand*{\addFileDependency}[1]{
  \typeout{(#1)}
  \@addtofilelist{#1}
  \IfFileExists{#1}{}{\typeout{No file #1.}}
}
\newcommand{\Desc}[2]{\State \makebox[2em][l]{#1}#2}
\begin{document}

\maketitle

\begin{abstract}
A blurring algorithm with linear time complexity can reduce the small-scale content of data observed at scattered locations in a spatially extended domain of arbitrary dimension.
The method works by forming a Gaussian interpolant of the input data, and then convolving the interpolant with a multiresolution Gaussian approximation of the Green's function to a differential operator whose spectrum can be tuned for problem-specific considerations.
Like conventional blurring algorithms, which the new algorithm generalizes to data measured at locations other than a uniform grid, applications include deblurring and separation of spatial scales.
An example illustrates a possible application toward enabling importance sampling approaches to data assimilation of geophysical observations, which are often scattered over a spatial domain, since blurring observations can make particle filters more effective at state estimation of large scales.
Another example, motivated by data analysis of dynamics like ocean eddies that have strong separation of spatial scales, uses the algorithm to decompose scattered oceanographic float measurements into large-scale and small-scale components.
\end{abstract}

\begin{keywords}
  signal processing, data assimilation, multiresolution algorithms
\end{keywords}

\begin{AMS}
  65D15, 68U10, 65D10, 62M20, 93E11, 60G35
\end{AMS}

\section{Introduction}
\label{sec:intro}
We present a linear-time blurring algorithm that works for data measured at scattered points in $\mathbb{R}^d$, for arbitrary dimension $d>0$, and that permits choice in shaping its response to different length scales.
This algorithm is equivalent to solving a positive definite self-adjoint elliptic partial differential equation.

By blurring, we mean the process of attentuating small-scale content of a dataset.
Doing so offers a way to denoise and simplify spatial data whose large features are of primary relevance.
Blurring is also a mechanism for isolating small-scale content by simply subtracting the blurred version from the original.
We show an example of each of these use cases.

Applications of blurring in scientific computing generally benefit from having control over the blurring spectrum, i.e.~the factor by which it attenuates inputs of different spatial scales.
This concept is made more rigorous in \ref{sec:method}.
Regularly-spaced data on a periodic domain would enable straightforward application of Fourier methods to implement a blurring algorithm that obeys a desired spectrum.
But applications in geophysics, and remote sensing in general, often involve measurements made at irregularly scattered locations in a spatially-extended domain.
It is therefore desirable not to require a regular grid.

The paper is organized as follows.
Our blurring method is described in \cref{sec:method}; an illustrative synthetic example is shown in \cref{sec:ex1}; an example of separating large and small scales of scattered oceanographic data is presented in \cref{sec:scale-separation}; how this blur can be applied in its original motivating context of meteorological data assimilation is described in \cref{sec:ssir} along with a connection to generalized Gaussian random fields that led to our algorithm's discovery; another example using real meteorological data is in \cref{sec:ex-radiosonde} to show the blur has the desired effect on particle filtering; algorithmic complexity and generalizations toward practical application of our method are discussed in \cref{sec:discussion}; and conclusions follow in \cref{sec:conclusions}.

\section{Method}
\label{sec:method}
Let $\mathbf{z} \in \mathbb{R}^{N_z}$ be a vector of data at locations $\mathbf{q}_i \in \mathbb{R}^d$ for each $i \in (1, \cdots, N_z)$, where $N_z$ is the number of observations.
The proposed blur $\mathbf{S}$ works by solving for a discrete approximation of $\mathscr{D}^{-1}\zeta$, where $\mathscr{D}$ is an elliptic differential operator described in the next paragraph and $\zeta : \mathbb{R}^{d} \rightarrow \mathbb{R}$ is a continuous-domain interpolant of the data $\mathbf{z}$ expressed as a sum of Gaussians.
This obtains by approximating the Green's function $g$ of $\mathscr{D}$ as a multiresolution sum of Gaussians, computing the convolution of that approximation with $\zeta$, and evaluating the result at the locations $\{\mathbf{q}_i\}$.

Define $\mathscr{D}$ to be the fractional bound-state Helmholtz operator:
\begin{align}
	\mathscr{D} &= \left(1- \ell^2 \Delta \right)^{\beta}, \label{eq:diff_op}
\end{align}
where $\Delta$ is the formal Laplacian operator, $\ell>0$ is a tuning parameter with dimensions of length, and $\beta>0$ is a dimensionless tuning parameter that controls the rate of growth of eigenvalues.
Eigenfunctions of $\mathscr{D}$ are Fourier modes of wavenumber $k$ and corresponding eigenvalues $(1+\ell^2|k|^2)^{\beta}$.
The characteristic scale of this operator is $\ell/(2\pi \sqrt{2^{1/\beta}-1})$, in the sense that eigenfunctions with length scales longer than this have corresponding eigenvectors close to 1.

A drawback of this approach is that convolution with $g$ attenuates all but constant functions on $\mathbb{R}^d$, so even a constant data vector $\bm{z}$ will be attenuated to some degree.
We will discuss a way to mitigate this effect in \cref{sec:ex1}.

To represent the data in a continuous form that allows convolution with $g$, we choose radial basis function (RBF) interpolation \cite{FF15}.
RBF interpolation of the observations requires us to choose a kernel $\psi : \mathbb{R}^+ \rightarrow \mathbb{R}$ that is used as the \emph{radial basis} in which the interpolant will represent the data.
The interpolant takes the form
\begin{align}
    \zeta(\cdot) &= \sum_{j=1}^{N_z} b_j \psi(\|\cdot - \mathbf{q}_j\|),
\end{align}
where $(b_j)$ are interpolation weights such that
\begin{align}
    \zeta(\mathbf{q}_i) &= \sum_{j=1}^{N_z} b_j \psi(\|\mathbf{q}_i - \mathbf{q}_j\|) = z_i. \label{eq:rbf}
\end{align}
In matrix form, this linear system becomes
\begin{equation}
    \mathbf{B}\mathbf{b} = \mathbf{z}.
\end{equation}
We take the RBF kernel to be $\psi(\|\cdot\|) = \phi(\cdot; 0,\xi \mathbf{I})$, where $\phi(\cdot;\mathbf{\mu},\mathbf{\Sigma})$ is the density of a $d$-variate Gaussian random variable with mean $\mathbf{\mu}$ and covariance $\mathbf{\Sigma}$
\begin{align}
    \phi(\cdot; \bm{\mu}, \bm{\Sigma}) &=  \left(2\pi \det \bm{\Sigma} \right)^{-d/2} \exp \left( (\cdot - \bm{\mu})^T \bm{\Sigma}^{-1} (\cdot - \bm{\mu}) \right).
\end{align}
This notation is used as a convenient description of Gaussian functions even though we will not use them to describe any random variables.

One may worry that this method is hardly fast, despite using a fast PDE solver, since a naive approach to solving for $\mathbf{b}$ requires $\mathcal{O}(N_y^3)$ operations.
Computational complexity of our algorithm, including faster alternatives to solving for $\bm{b}$, is discussed in \cref{sec:discussion}.

The multiresolution Gaussian approximation of the Green's function begins by writing the Fourier transform of the inverse of \cref{eq:diff_op} as an inverse-power function $\hat{g}(t)=t^{-\beta}$ where $t = 1 + \ell^2 |k|^2$.
Exponential approximations of inverse power functions like this are studied in \cite{BM10,McLean18}.
The approach therein is to write a finite trapezoid-type discretization of an integral representation of $\hat{g}$.
With the change of variables introduced by McLean \cite{McLean18}, the integral representation to be discretized is
\begin{align}
    \frac{1}{t^\beta} &= \frac{1}{\Gamma(\beta)} \int_{-\infty}^{\infty} \exp \left(- \varphi ( x,t ) \right) (1 + e^{-x})~dx, \label{eq:mclean-int}
\end{align}
where
\begin{align}
    \varphi( x, t) &= t \exp (x - e^{-x}) - \beta(x-e^{-x}).
\end{align}

The finite trapezoid rule discretizes this into
\begin{align}
    \frac{1}{t^{\beta}} \approx
     \frac{1}{\Gamma(\beta)} \sum_{n=-M_-}^{M_+} v_n e^{-a_n t}, \label{eq:mclean-disc}
\end{align}
where $\Gamma(\beta)$ is the Gamma function and
\begin{align} \label{eq:exp-approx}
    a_n &= \exp(nh - e^{-nh}), \\
    v_n &= h(1+e^{-nh}) \exp \left( \beta(nh - e^{-nh}) \right).
\end{align}


Ref.~\cite{McLean18} Lemma 4 shows that the total required number of terms $M_-+M_++1$ scales as $(\ln E)^2$ to achieve uniform relative error bounded by $E>0$ in the limit $E \downarrow 0$.

The approximation in \cref{eq:mclean-disc} can now be rewritten in terms of normalized multivariate isotropic Gaussian functions of $k$.
Given weights $v_n$ and exponential rates $a_n$ from the exponential approximation above, we can derive the multiplicative factors required of this equivalent formulation:
\begin{align}
    v_n e^{-a_n (1+\ell^2 k^2)} &= v_n e^{-a_n} e^{-a_n \ell^2 k^2} \\
            = v_n e^{-a_n} &(2\pi /2\ell^2a_n)^{d/2} \left( (2\pi /2\ell^2a_n)^{-d/2} e^{-2\ell a_n^2k^2/2} \right) \\
            = v_n e^{-a_n} &(\pi/\ell^2a_n)^{d/2} \phi(k;0,1/2\ell^2a_n). \label{eq:gaussian-mra}
\end{align}
The second line obtains from simultaneously multiplying and dividing by the constant required to normalize the Gaussian term in large parentheses, which is written in that manner to ease visual comparison to the standard form of an isotropic $d$-variate Gaussian probability density function of mean $0$ and variance $1/2\ell^2a_n$.
Combining \cref{eq:mclean-disc}-\cref{eq:gaussian-mra} yields
\begin{align}
    \frac{1}{(1+\ell^2 |k|^2)^{\beta}} \approx
        \frac{1}{\Gamma(\beta)} \sum_{n=-M_-}^{M_+} v_n e^{-a_n} (\pi/\ell^2 a_n)^{d/2} \phi(k; 0, 1/2\ell^2a_n). \label{eq:mclean-disc-gauss}
\end{align}
A plot of the relative error committed by this approximation is shown in \cref{fig:gaussian-approx-error}.

Taking the inverse Fourier transform and combining terms finally yields the desired approximation of the Green's function in physical space in terms of normalized Gaussians:
\begin{align}
    g(\cdot) &\approx \sum_{n=-M_-}^{M_+} c_n \phi(\cdot; 0, \rho_n), \label{eq:green-approx} \\
    \rho_n &= 2 \ell^2 a_n, \\
    c_n &= \frac{v_n (\pi/\ell^2a_n)^{d/2-1}}{\Gamma(\beta) e^{a_n}}. \label{eq:green-approx-end}
\end{align}

\begin{figure}
    \centering
    \includegraphics[width=\columnwidth]{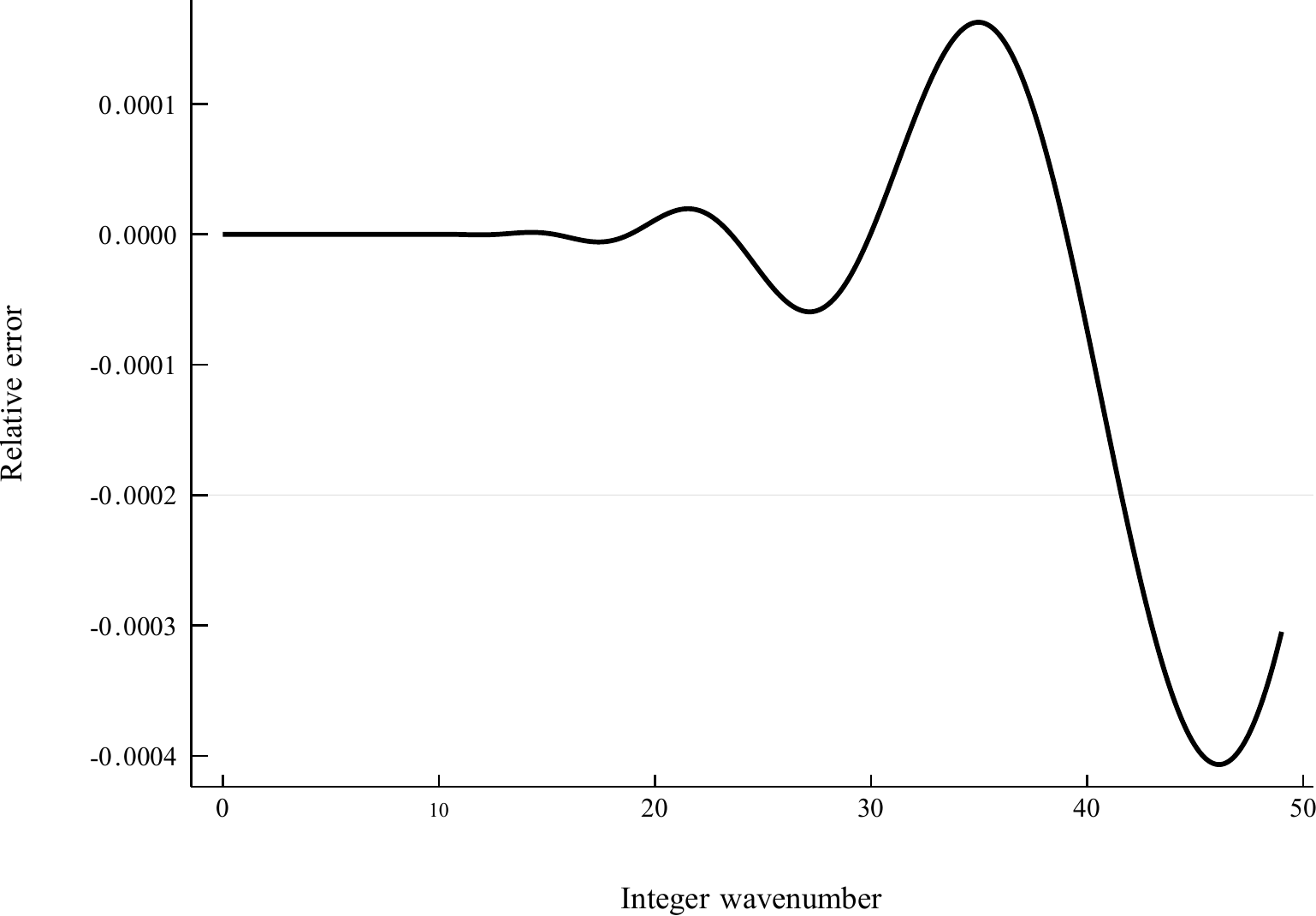}
    \caption{Relative error of a Gaussian approximation in the form \cref{eq:gaussian-mra} of the Fourier-space Green's function $(1-\ell^2k^2)^{-1/\beta}$ with parameters $\ell=1.0$, $\beta=0.5$, $h=0.2$, $M=32$, and $N=28$.}
    \label{fig:gaussian-approx-error}
\end{figure}

With approximations of the data and the Green's function now constructed in terms of $d$-variate isotropic normalized Gaussians, convolution of which is trivial, applying a discrete version of the integral operator that inverts $\mathscr{D}$ is just:
\begin{align} \label{eq:conv}
    \mathscr{D}^{-1} y &\approx \left( \sum_{i=-M_-}^{M_+} c_i \phi (\cdot; 0, \rho_i I)  \right)
        * \left( \sum_{j=1}^{N_y} b_j \phi (\cdot; \mu_j, \xi I) \right) \\
                &= \sum_{i=-M_-}^{M_+} \sum_{j=1}^{N_y} c_i b_j \phi (\cdot; 0, \rho_i I) * \phi (\cdot; \mu_j, \xi I) \\
                &= \sum_{j=1}^{N_y} b_j \left(\sum_{i=-M_-}^{M_+} c_i \phi (\cdot; \mu_j, (\rho_i+\xi_j) I) \right). \label{eq:rbf-blurred}
\end{align}
Evaluating this quantity at the observation locations yields the output of our blur.
The last line in the manipulation above shows that the continuous function we evaluate to arrive at outputs can be interpreted as an RBF interpolant of the blurred data in terms of a new blurred basis function $\tilde{\psi}$, given by
\begin{equation}
    \tilde{\psi}(\cdot) = \sum_{i=-M_-}^{M_+} c_i \phi \left(\cdot; 0, (\rho_i+\xi_j) I\right). 
\end{equation}
The weights $b_j$ of the blurred interpolant are identical to the weights of the input interpolant, which will be valuable later in this section.

We now describe our blurring algorithm more concretely. Algorithm 2.1 ties together pieces of the Green's function approximation specified in \cref{eq:mclean-int}-\cref{eq:green-approx-end}. Algorithm 2.2 combines RBF interpolation of the data with the output of Algorithm 2.1 with a convolution and evaluates the result at the data locations. These algorithms, used together, are a complete description of our blur.

\begin{algorithm}
\label{alg:mclean}
\begin{algorithmic}
\caption{Gaussian approximation of fractional bound-state Helmholtz kernel}
\Input
\Desc{Blurring scale parameter $\ell > 0$.}
\Desc{Blurring shape parameter $\beta > 0$.}
\Desc{Integration mesh size $h>0$.}
\Desc{Number of negative integration steps $M_- \in \mathbb{N}$.}
\Desc{Number of positive integration steps $M_+ \in \mathbb{N}$.}
\Desc{Dimension of each measurement location $d \in \mathbb{N}$.}
\EndInput

\Output
\Desc{Vector of positive weights $\bm{c} \in \mathbb{R}^{M_-+M_++1}$.}
\Desc{Vector of variances $\bm{\rho} \in \mathbb{R}^{M_-+M_++1}$.}
\EndOutput

\Function{GaussianBSH}{$\ell$, $\beta$, $h$, $M_-$, $M_+$, $d$}

\For{$n = -M_- \to M_+$}
\State $\hat{a} \gets \exp \left( nh - \exp \left( -nh \right) \right)$.
\State $\hat{w} \gets h \left( 1 + \exp \left( -nh \right) \right) \exp \left( \beta \left( nh - \exp \left( -nh \right) \right) \right)$.
\State $\hat{w}^\prime \gets \hat{w} \exp \left( -\hat{a} \right) \left( \pi / \ell^2 \hat{a} \right)^{d/2}/\Gamma(\beta)$.

\State $\rho_n \gets 2\ell^2\hat{a}$.
\State $c_n \gets \hat{w}^\prime \rho / 2\pi$.
\EndFor

\State \Return $\bm{c}, \bm{\rho}$.
\EndFunction
\end{algorithmic}
\end{algorithm}

\begin{algorithm}
\label{alg:blur}
\begin{algorithmic}[1]
\caption{Blur}
\Input
\Desc{Vector of measured data $\mathbf{z} \in \mathbb{R}^{N_z}$.}
\Desc{Array of data location vectors $\mathbf{q}_i \in \mathbb{R}^d$, $i\in(1, \ldots, N_z)$.}
\Desc{RBF scale parameter $\xi > 0$.}
\Desc{Vector of positive weights $\bm{\omega} \in \mathbb{R}^{M_-+M_++1}$.}
\Desc{Vector of variances $\bm{\rho} \in \mathbb{R}^{M_-+M_++1}$.}
\EndInput

\Output
\Desc{Array of blurred data $\tilde{\mathbf{z}}_i \in \mathbb{R}$, $i\in(1, \ldots, N_z)$}
\EndOutput

\Function{Blur}{$\mathbf{z}$, $\mathbf{q}$, $\xi$, $\bm{\omega}$, $\bm{\rho}$}

\State Let $\phi(\|\cdot\|; 0, \xi \mathbf{I})$ be a unit-mass Gaussian to use as the RBF kernel.

\State Generate RBF weight matrix $\mathbf{B}$ with elements:
\For{$i = 1 \to N_z, j = 1 \to N_z$}
\State $B_{ij} \gets \phi(\|\mathbf{z}_i - \mathbf{q}_j\|; 0, \xi \mathbf{I})$.
\EndFor

\State $\mathbf{b} \gets \mathbf{B}^{-1}\mathbf{z}$.

\For{$i = 1 \to N_y$}
\State $\tilde{z}_i \gets \sum_{i',n} (b_i \cdot \omega_n) \phi(\|y_i - \bm{q}_i'\|;0,(\xi + \rho_n)I)$.
\EndFor

\State \Return $\tilde{\mathbf{z}}$.

\EndFunction
\end{algorithmic}
\end{algorithm}

Algorithm 2.2 defines a linear operator $\mathbf{S}$ on $\mathbb{R}^{N_z}$.
We will prove that $\mathbf{S}$ is positive definite in \cref{thm:pos-def}, which will be useful in \cref{sec:ssir}.
The theorem is more general than the specific algorithm so far presented, which will set the stage for potential variants to be described in \cref{sec:discussion}.
To ease into the theorem, we will summarize the preceding development of $\mathbf{S}$ and connect it to a briefer alternative formulation that is easier to treat analytically.

We described a sequence of mappings between vector spaces, with blurring taking place most explicitly in the function space $L^2(\mathbb{R}^{N_y})$ of interpolants by way of the convolution $\psi_i \mapsto \mathscr{G}\psi_i$, where $\mathscr{G}$ denotes an operator that performs convolution with the Gaussian approximation of $g$, and $\psi_i$ is defined as the interpolation basis function $\psi(\|\cdot - \mathbf{q}_i\|)$ centered at location $\mathbf{q}_i$.
\begin{samepage}
Taken literally, that conceptual development prescribes the following composition of linear operations:
\begin{align}
    Y \xrightarrow{\mathbf{B}^{-1}} W \xrightarrow{\mathscr{F}} X \xrightarrow{\mathscr{G}} \tilde{X} \xrightarrow{\tilde{\mathscr{F}}^{-1}} \tilde{W} \xrightarrow{\tilde{\mathbf{B}}} \tilde{Y}.
\end{align}
Nodes in this diagram represents the various vector spaces found along the way of describing our blurring algorithm:
\begin{itemize}
    \item $Y$ is the space of input data,
    \item $W$ is the space of interpolant weights in the basis $\{ \psi_i \}$,
    \item $X = \vspan \{ \psi_i \} \subset L^2(\mathbb{R}^n)$ is the space of interpolants,
    \item $\tilde{X} = \vspan \{ \mathscr{G}\psi_i \} \subset L^2(\mathbb{R}^n)$ is the space of blurred interpolants,
    \item $\tilde{W}$ is the space of blurred interpolant weights in the basis $\{ \mathscr{G}\psi_i \}$, and
    \item $\tilde{Y}$ is the space of blurred data.
\end{itemize}
\end{samepage}
\begin{samepage}
Arrows in the diagram represent the action of the operators superscribed on them:
\begin{itemize}
    \item $\mathbf{B}^{-1}$ maps input data to RBF weights,
    \item $\mathscr{F}$ maps RBF weights to interpolated functions,
    \item $\mathscr{G}$ maps interpolated functions to blurred functions,
    \item $\tilde{\mathscr{F}}^{-1}$ maps blurred functions to weights in a blurred RBF basis, and
    \item $\tilde{\mathbf{B}}$ maps blurred weights to blurred data.
\end{itemize}
\end{samepage}

The complicated sequence of steps above can simplify greatly; observe in \cref{eq:rbf-blurred} that the weights in the blurred basis $\{ \mathscr{G}\psi_i \}$ are always identical to the weights in the unblurred basis $\{ \psi_i \}$.
Therefore $\tilde{\mathscr{F}}^{-1}\mathscr{G}\mathscr{F} = \mathbf{I}$, leaving just $\mathbf{S} = \tilde{\mathbf{B}}\mathbf{B}^{-1}$ where $\tilde{\mathbf{B}}$ is the RBF matrix in the blurred RBF basis:
\begin{equation}
    \tilde{B}_{i,j} = \tilde{\psi}(\|\mathbf{q}_i-\mathbf{q}_j\|).
\end{equation}
This alternative perspective demonstrates that $\mathbf{S}$ is equivalent to finding weights $\mathbf{b}$ for an RBF interpolant of the unblurred data using a basis $\{\psi_i\}$, and then evaluating an RBF interpolant of the \emph{blurred} data using the same weights $\bm{b}$ that now act as coefficients on a blurred basis $\{ \mathscr{G} \psi_i \}$.
The following theorem is stated in terms of this simplified perspective.

\begin{theorem} \label{thm:pos-def}
Let $\psi : \mathbb{R}^{+} \rightarrow \mathbb{R}$ be an interpolating radial basis function and let $g : \mathbb{R}^{+} \rightarrow \mathbb{R}$ be a convolution kernel.
Suppose $\psi$ and $g$ each have positive Fourier transforms, and define $\tilde{\psi} = g*\psi$.
Then the matrices $\mathbf{B}$ with entries $B_{ij} = \psi(\|\mathbf{q}_i - \mathbf{q}_j\|)$ and $\tilde{\mathbf{B}}$ with entries $\tilde{B}_{ij} = \tilde{\psi}(\|\mathbf{q}_i - \mathbf{q}_j\|)$ are symmetric positive definite, and the product $\mathbf{S}=\tilde{\mathbf{B}} \mathbf{B}^{-1}$ is positive definite.
\end{theorem}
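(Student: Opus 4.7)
The plan is to establish symmetric positive definiteness of $\mathbf{B}$ and $\tilde{\mathbf{B}}$ by a Fourier/Bochner-type argument, and then to deduce positive definiteness of the (generally non-symmetric) product $\mathbf{S}$ by exhibiting it as similar to an explicit SPD matrix, so that its spectrum is real and strictly positive.

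For $\mathbf{B}$, symmetry is immediate from $\|\mathbf{q}_i - \mathbf{q}_j\| = \|\mathbf{q}_j - \mathbf{q}_i\|$. For positive definiteness I would take arbitrary $\mathbf{v} \in \mathbb{R}^{N_z}$ and insert the Fourier inversion formula for the radial function $\mathbf{x} \mapsto \psi(\|\mathbf{x}\|)$, obtaining
\[
\mathbf{v}^T \mathbf{B} \mathbf{v} \;=\; \sum_{i,j} v_i v_j\, \psi(\|\mathbf{q}_i - \mathbf{q}_j\|) \;=\; \int_{\mathbb{R}^d} \hat{\psi}(k)\, \Bigl| \sum_{j=1}^{N_z} v_j\, e^{-2\pi i k \cdot \mathbf{q}_j} \Bigr|^2\, dk.
\]
Since $\hat{\psi} > 0$ and, assuming the observation points are distinct (which is implicit), the trigonometric sum vanishes identically only when $\mathbf{v} = 0$, the integral is strictly positive for any nonzero $\mathbf{v}$. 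The same argument applies verbatim to $\tilde{\mathbf{B}}$ once I observe that, by the convolution theorem, $\widehat{\tilde{\psi}} = \hat{g}\, \hat{\psi}$ is a product of two positive functions and therefore itself strictly positive.

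For the product $\mathbf{S} = \tilde{\mathbf{B}} \mathbf{B}^{-1}$, since $\mathbf{B}$ is SPD it admits a unique SPD square root $\mathbf{B}^{1/2}$, and I would write the similarity
\[
\mathbf{S} \;=\; \mathbf{B}^{1/2} \bigl( \mathbf{B}^{-1/2} \tilde{\mathbf{B}}\, \mathbf{B}^{-1/2} \bigr) \mathbf{B}^{-1/2}.
\]
The parenthesized matrix is a symmetric congruence of $\tilde{\mathbf{B}}$, hence itself SPD, so $\mathbf{S}$ is similar to an SPD matrix and therefore has real, strictly positive eigenvalues. This is the natural sense in which a product of two SPD matrices is called positive definite; the stronger condition $\mathbf{v}^T \mathbf{S} \mathbf{v} > 0$ need not hold because $\mathbf{S}$ is typically not symmetric.

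The main obstacle will be justifying the Fourier representation when $\psi(\|\cdot\|)$ or $g(\|\cdot\|)$ fails to be in $L^1(\mathbb{R}^d)$; this can be handled either via tempered distributions or by invoking Bochner's theorem directly, which characterizes continuous positive-definite functions as Fourier transforms of finite positive measures and so delivers the integral representation above. In the concrete setting of this paper, $\psi$ and the Gaussian-sum approximation of $g$ are Schwartz functions with classical, strictly positive Fourier transforms, so no distributional machinery is actually needed.
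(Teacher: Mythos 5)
Your proof is correct and follows the same logical skeleton as the paper's: positive Fourier transforms give symmetric positive definiteness of $\mathbf{B}$ and $\tilde{\mathbf{B}}$ (via the convolution theorem for the latter), and then the product of two symmetric positive definite matrices has real, strictly positive spectrum. The only difference is that where the paper cites the standard RBF result (Section 3 of \cite{FF15}) and Theorem 7.6.3 of \cite{Horn10}, you inline their proofs --- the Bochner-type integral representation and the similarity $\mathbf{S} = \mathbf{B}^{1/2}\bigl(\mathbf{B}^{-1/2}\tilde{\mathbf{B}}\mathbf{B}^{-1/2}\bigr)\mathbf{B}^{-1/2}$ to an explicit SPD congruence --- making the argument self-contained; your explicit caveat that positive definiteness of the non-symmetric $\mathbf{S}$ means positive real eigenvalues rather than a positive quadratic form agrees with the paper's own later remark that the symmetric part of $\mathbf{S}$ need not be positive definite.
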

\begin{proof}
A standard theorem of RBF interpolation (e.g.~Section 3 of \cite{FF15}) states that $\mathbf{B}$ is positive definite under the assumption that the Fourier transform of $\psi$ is positive.

The Convolution Theorem guarantees that $\tilde{\psi} = g*\psi$ has a positive Fourier transform if $g$ and $\psi$ both have positive Fourier transforms, so $\tilde{\mathbf{B}}$ is positive definite for the same reason that $\mathbf{B}$ is positive definite.

Observe that $\mathbf{B}$ and $\tilde{\mathbf{B}}$ are symmetric by construction, and that $\mathbf{B}^{-1}$ is symmetric positive definite since it is the inverse of a symmetric positive definite matrix.
Theorem 7.6.3 in \cite{Horn10} states that the product of a positive definite matrix $\mathbf{P}$ and a Hermitian matrix $\mathbf{Q}$ is a matrix with the same number of negative, zero, and positive eigenvalues as $\mathbf{Q}$.
It follows that the product of two Hermitian positive definite matrices is also positive definite.
Therefore, since $\tilde{\mathbf{B}}$ and $\mathbf{B}^{-1}$  are Hermitian positive definite matrices, $\tilde{\mathbf{B}}\mathbf{B}^{-1}$ is positive definite.
\end{proof}

The coefficients in the multiresolution approximation \cref{eq:green-approx} are all positive, and the Fourier transform of a positive Gaussian is also a positive Gaussian.
Therefore $\mathbf{S}$, as defined by Algorithm 2.2 together with Algorithm 2.1, is positive definite as a corollary of \cref{thm:pos-def}.

\section{Example 1: circular measurement locations embedded in a 2-plane}
\label{sec:ex1}
We want to verify that the blur's effect resembles what we would expect of a discrete approximation to $\mathscr{D}^{-1}$.
To that end, we blur equally-spaced data on a circle embedded in $\mathbb{R}^2$ to provide insight into the spectral properties of the blur in practice.
This example will also describe heuristics in choosing parameters $\ell$, $\beta$, and $\xi$. 
In the course of this example we will also demonstrate an undesirable phenomenon whereby $\mathbf{S}$ attenuates even the largest scales, and suggest a workaround.

Locations were chosen to encircle the origin in $\mathbb{R}^2$ with $N_y=100$ distinct locations separated by unit distance from nearest neighbors, i.e.
\begin{align*}
\mathbf{q}_i &= \left| e^{2\text{i}n\pi/100}-1 \right|^{-1} \begin{bmatrix}
\cos \left( 2n\pi/100 \right) \\
\sin \left( 2n\pi/100 \right)
\end{bmatrix}.
\end{align*}
The interpolation kernel was chosen to be the isotropic Gaussian PDF with standard deviation $\xi^{1/2}=2.5$.
The convolution kernel is the multiresolution Gaussian approximation \cref{eq:green-approx} to the fractional bound-state Helmholtz kernel with $\ell=1$ and $\beta=1$, using approximation parameters $h=0.2$, $M=32$, and $N=28$.
These parameters yield an approximation of $\hat{g}$ with $<0.05\%$ relative error up to $k_{max}=49$, the Nyquist number for the one-dimensional problem that this example simulates embedded in two dimensions.
Recall that \cref{fig:gaussian-approx-error} shows this relative error as a function of $k$.

The blur thus constructed defines a linear operator $\mathbf{S}$ on $\mathbb{R}^{N_y}$.
For the purpose of inspecting the effect of blurring at different scales we numerically constructed a matrix representation of $\bm{S}$ (though for practical application of the blurring algorithm it is inadvisable to actually construct $\mathbf{S}$).
Due to the rotational symmetry of observation locations, the matrices $\bm{B}^{-1}$ and $\tilde{\bm{B}}$ are circulant.
The class of circulant matrices is stable under inversion, transposition, and matrix multiplication, so $\bm{S}= \tilde{\bm{B}} \bm{B}^{-1}$ is also circulant.
Therefore eigenvectors of $\bm{S}$ are discrete Fourier vectors.
This connection provides a rationale for comparing eigenvalues of $\bm{S}$ to the spectrum of $\mathscr{D}^{-1}$, whose eigenfunctions are Fourier modes.
However, it is important to recognize that the comparison is imprecise because the interpolant \cref{eq:rbf} represents these discrete Fourier eigenvectors as functions that differ from Fourier modes on $\mathbb{R}^2$.

Eigenvalues of $\bm{S}$ are plotted in \cref{fig:blurring_operator_eigvals} as circles, and some examples of eigenfunctions of $\bm{S}$ are visualized beneath the plot for $k\in(1,2,25,49)$.
\emph{Eigenfunctions of} $\bm{S}$ are defined here as continuous interpolants of the matrix's eigenvectors found by the RBF interpolation scheme utilized in the blur.
This figure also shows a solid trace labelled ``Fourier'' that plots $(1+\ell^2k^2)^{-\beta}$, which is the spectrum of $\mathscr{D}^{-1}$ that corresponds to $\mathbb{R}^2$ Fourier modes.
Since eigenvectors of $\bm{S}$ do not correspond to $\mathbb{R}^2$ Fourier modes, this trace of the Fourier spectrum is only a rough comparison rather than an analytical prediction that we are trying to match.
The observed spectrum of $\mathbf{S}$ behaves as expected, with gradual blurring of small scale features.

\begin{figure}
    \centering
    \includegraphics[width=\columnwidth]{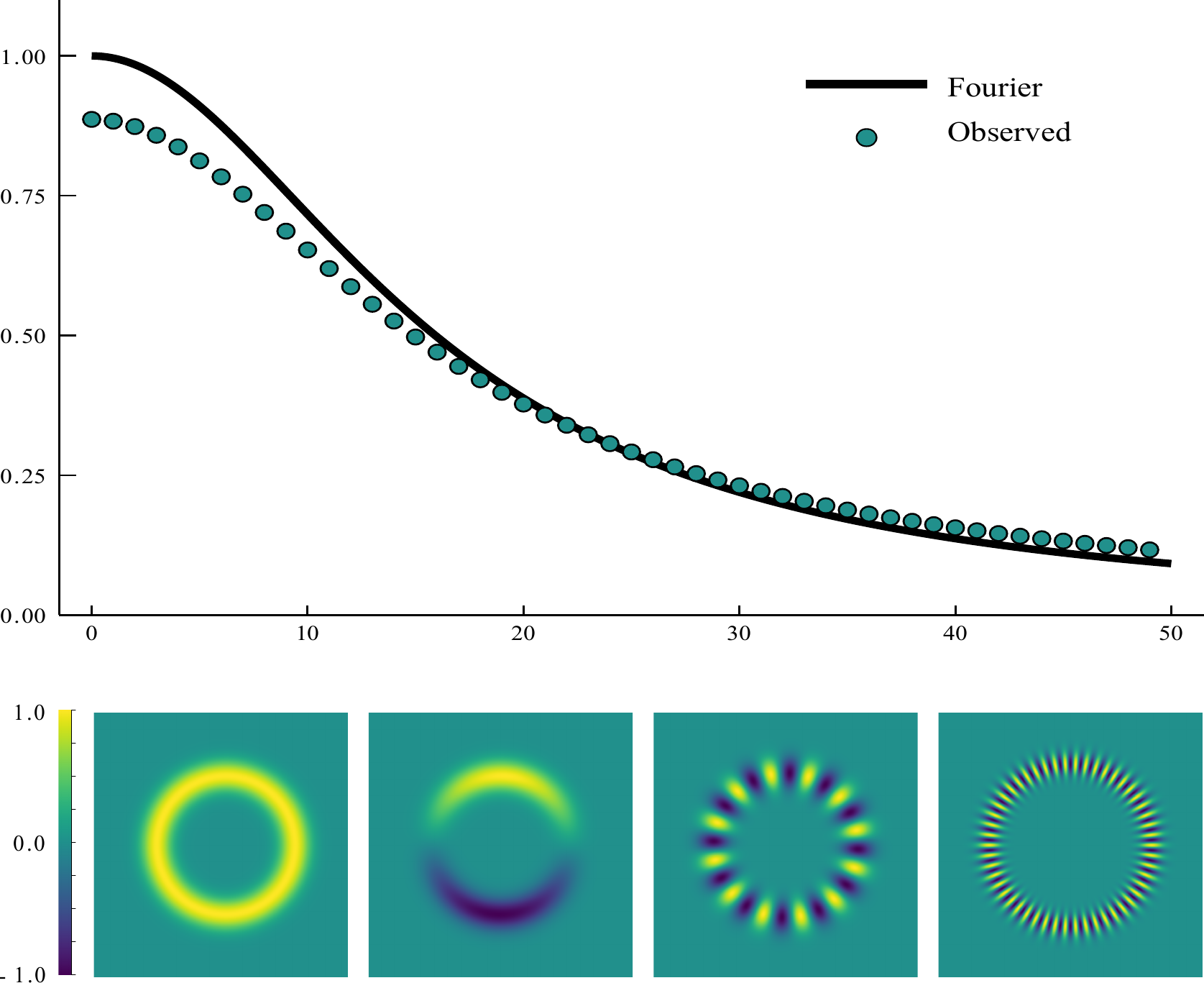}
    \caption{Top: points indicate eigenvalues of the matrix $\mathbf{S}$ defined using interpolation by Gaussian radial basis functions of standard deviation $\xi^{1/2}=2.5$,
    followed by convolution with a Gaussian approximation of the Green's function for the bound-state fractional Helmholtz kernel of $\mathscr{D}=(1-\Delta)^\beta$ with $\ell=1.0$ and $\beta=1$, acting on 100 equally spaced points around the origin in $\mathbb{R}^2$ with unit nearest-neighbor distance.
    The solid trace shows the spectrum $(1+k^2)^{-2\beta}$ of $\mathscr{D}^{-1}$, eigenfunctions of which are Fourier modes; this serves to highlight the similarity between the spectra of $\bm{S}$ and of $\mathscr{D}^{-1}$, but is not an analytical solution to match since interpolating the eigenfunctions of $\bm{S}$ are not Fourier modes in the plane.
    Bottom: some example eigenfunctions, defined as interpolants given by \cref{eq:rbf} of the eigenvectors of $\bm{S}$, for $k \in \{ 1, 2, 25, 49 \}$. Duplicate eigenpairs that arise due to symmetry are suppressed in this figure.}
    \label{fig:blurring_operator_eigvals}
\end{figure}

Recall from \cref{sec:method} that this method has a drawback of attenuating large scales.
That behavior is evident in the eigenvalues plotted in \cref{fig:blurring_operator_eigvals}, which are all less than 1.
Eigenvalues less than 1 correspond to attenuation, so the blur attenuates even the largest scale (eigenmode index 0).
This over-attenuation occurs because convolution with $g$ attenuates every Fourier eigenmode of $\mathscr{D}$ except constant functions in $\mathbb{R}^{d}$.
Since a finite Gaussian approximation can never fully describe a nonzero spatial constant, even the largest-scale function in the space of possible RBF interpolants will be attenuated by our blur.
The largest-scale eigenfunctions in this example are thin in the direction transverse to the circle, causing those modes to be blurred more than continuous Fourier modes in $\mathbb{R}^2$ with the same wavenumber (i.e.~Fourier modes with \emph{planar} length scale equal to the \emph{circumferential} length scale of $\bm{S}$ eigenfunctions)

For a similar reason that large scales are attenuated too much, the blur does not suppress the smallest-scale eigenmodes as much as $\mathscr{G}$ would suppress a true $\mathbb{R}^2$ Fourier mode of the same wavenumber.
This is because the RBF interpolants of the most highly-oscillatory eigenvectors in this example have more large-scale content than $\mathbb{R}^2$ Fourier modes with the same length scale.

Over-attenuation can be mitigated, so that the largest scales are closer to unity, by rescaling the operator by replacing $\mathbf{S} \mapsto \mathbf{S} / \|\mathbf{S1}\|$, where $\mathbf{1}$ is a unit-norm vector with all entries identical.
The eigenvectors of $\mathbf{S}$ are usually \emph{not} discrete Fourier vectors like they are in this symmetric example, so the largest-scale eigenvector is not necessarily $\mathbf{1}$.
Therefore this mitigation technique is only a heuristic, which derives from the idea that an input with identical entries contains little small-scale information.

Choosing the RBF standard deviation parameter $\xi^{1/2}$ is not to be taken lightly.
We recommend choosing it to be roughly on the order of the nearest-neighbor distance between measurements.
A value too small prevents the RBF interpolation step from resolving gradual transitions from location to location, causing the interpolant to appear as a rugged set of ``spikes'' that are overly suppressed by the convolution step on account of their inappropriately small scale.
Choosing an interpolation kernel that is too large, however, can cause numerical problems related to ill-conditioning of the linear system we must solve to arrive at RBF coefficients.
Choosing $\xi^{1/2}$ to be as large as possible, balanced against insurmountable instability due to ill-conditioning, is considered a best practice in RBF literature \cite{FF15}.



\section{Example 2: scale separation of ocean temperature measurements}
\label{sec:scale-separation}
The application of our algorithm is demonstrated with data from Argo, an international instrumentation project to measure the world's oceans \cite{ARGO}.
Argo uses an array of approximately 3600 profiling floats.
Every 10 days, these devices sink to a depth of 2000 meters to capture a depth-wise profile of measurements as they float back to the surface.

\begin{figure}
    \centering
    \includegraphics[width=0.85\columnwidth]{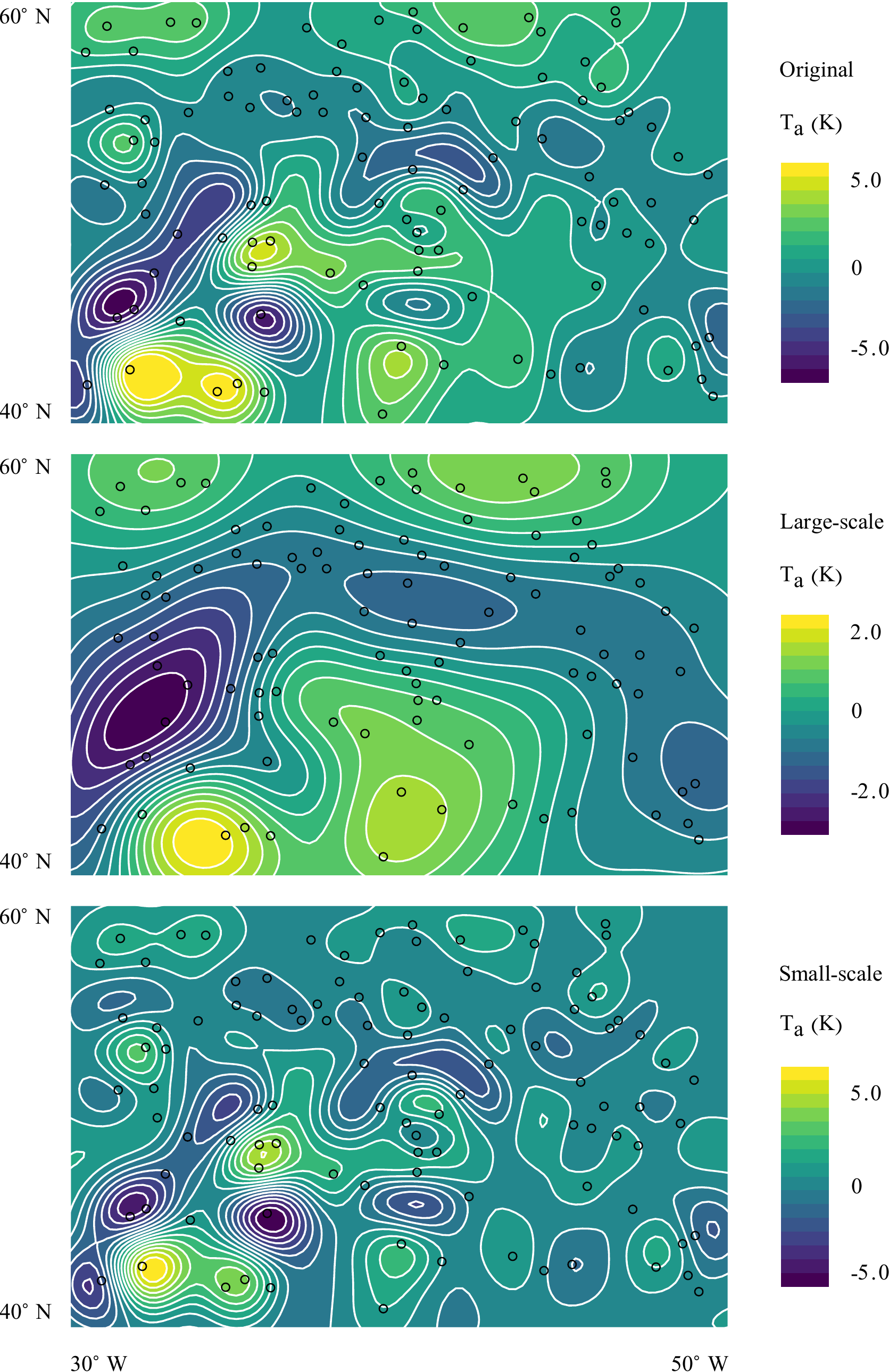}
    \caption{
    Sea surface temperature measurements made by floats in the Argo oceanographic instrumentation project.
    The horizontal and vertical coordinates describe longitudinal and latitudinal displacement, in kilometers, from a center of $50^\circ$N $35^\circ$W.
    Contours and colors indicate interpolated temperature differences in degrees Celsius, as estimated by the original interpolant of the data used in the blurring algorithm with $\xi^{1/2}=175km$, $\ell=70km$, and $\beta = 8$ (top); by the blurred interpolant representing large-scale features (center), and the small-scale interpolant defined as the difference of the blurred interpolant subtracted from the original interpolant (bottom).
    Although the linear fit is an important part of both the original and large scales, it is not added back for the purpose of this visualization in order to show other large-scale features that would otherwise be difficult to see.
    }
    \label{fig:argo}
\end{figure}

We apply our blurring algorithm to Argo sea surface temperature measurements to decompose the data into large-scale and small-scale components.
Physical oceanographers are interested in how ocean heat content varies across spatial and temporal scales, and the Argo data set is an invaluable tool \cite{wortham2014multidimensional}.
Given the irregularity of the Argo data in space and time, it is usually interpolated to a regular grid before use \cite{roemmich20092004}.
Our blurring algorithm enables extraction of the large-scale (blurred) part of the data without the need to first interpolate onto a grid.
The data used in this example, accessed on February 13, 2020, were taken from the region $40^\circ$N--$60^\circ$N and $20^\circ$--$50^\circ$W over the time period February 1, 2020 through February 10, 2020.

A small number of data points were removed to maintain a minimum nearest-neighbor separation of 50 kilometers.
A least-squares linear fit is subtracted from the remaining data to arrive at deviations.
These deviations are blurred with \cref{alg:blur} using RBF length scale $\xi^{1/2}=175$ kilometers, $\ell=70$ kilometers, and exponent $\beta=8$.
Subtracting the linear fit to obtain deviations avoids rapid variation of the interpolant that could otherwise happen near the boundaries due to the basis functions' rapid decay.

The original field of sea surface temperature deviations, as interpolated by RBF stage of our algorithm, is depicted in the top panel of \cref{fig:argo}.
The center panel shows the large-scale component, which is the result of blurring the temperature deviations.
The small scale component, plotted in the bottom panel, is the difference of the unblurred and blurred temperature deviations.

\section{Application to particle filtering}
\label{sec:ssir}
The blur described above was originally motivated by an effort to mitigate the dimensional curse of particle filtering, which we will refer to as sequential importance sampling with resampling (SIR) hereinafter.\footnote{Sequential importance sampling (SIS) is also known as \emph{particle filtering}.
In this paper we specialize on SIS with resampling (SIR), the most famous variety of particle filter, though the dimensional curse and presumably our applications also extend to other particle filter varieties.}
This section will describe how the blur was derived from the motivating considerations about SIR, in terms of an observation error covariance matrix.

SIR begins by running an ensemble of forecasts; each ensemble member is a \emph{particle}. 
A set of observations of the true system is then taken; the observations are corrupted by instrument errors, whose distribution defines a likelihood.
This likelihood is used to compute weights for each particle so that the weighted ensemble approximates the Bayesian posterior.

SIR is susceptible to a phenomenon called \emph{collapse}, characterized by essentially all the ensemble weight accumulating on a single ensemble member that is closest to the observations, causing the filter to catastrophically underestimate posterior dispersion.
The number of ensemble members required to avoid SIR collapse depends on system covariance and observation error covariance, scaling exponentially in an effective system dimension.

Specific estimates of ensemble size required to avoid collapse, provided in \cite{SBBA08,SBM15}, suggest one can reduce the required ensemble size by increasing eigenvalues of the observation error covariance.\footnote{Decreasing eigenvalues of the system covariance has the same effect, for the same reasons. However, it is harder to justify changes to a dynamical model, and to implement those changes, than to modify the observation model as we propose.}
Doing so carefully can also improve uncertainty quantification for a fixed number of ensemble members.
Ref. \cite{RGK18} suggests inflating the observation error variance at small scales, letting variance grow in wavenumber, since small scales have very limited predictability in geophysical flows \cite{Lorenz69,RS08,Judt18}.

To be more precise, consider an observing system
\begin{align}
    y(\bm{q}) &= \mathscr{H}\{x\}(\bm{q}) + r^{1/2}(\bm{q})~\epsilon(\bm{q}), \label{eq:cont-obs-sys}
\end{align}
where $y(\bm{q}) \in \mathbb{R}$ is the observation at location $\bm{q} \in \mathbb{R}^{d}$,
$\mathscr{H}$ is a function-valued observation operator acting on $x$, which describes the scalar system state as a function of location, $r^{1/2}(\bm{q})$ can be imagined as the standard deviation of the observation error at $\bm{q}$, and $r^{1/2}(\bm{q})~\epsilon(\bm{q})$ is the random observation error.

It is natural to think of $\epsilon$ as a random field.
But letting the spectrum grow in wavenumber precludes pointwise definition of $\epsilon$, with probability 1, so it is not a random field in the traditional sense.
The idea of imposing a correlation structure with a growing spectrum can instead be understood in the framework of generalized random fields.
In this case $\epsilon$ can be treated as a random process with realizations taking the form of tempered distributions, i.e.~elements of the topological dual to a Schwartz space of rapidly decaying functions on $\mathbb{R}^d$.

Since realizations of $\epsilon$ with a growing spectrum cannot be described pointwise, instead interpret the spatially parametrized terms in \cref{eq:cont-obs-sys} as averages with respect to a Schwartz function $\nu$ that is closely concentrated near $\bm{q}$.
For example,
\begin{align}
    \epsilon(\bm{q}) \equiv \left. \int_{\mathbb{R}^d} \epsilon~\nu~\text{d}x \middle/ \int_{\mathbb{R}^d} \nu~\text{d}x \right. .
\end{align}
Narrowing our attention within the scope of generalized random fields, let $\epsilon$ be a mean-zero stationary Gaussian generalized random field (GGRF).
Then the vector $\left(\epsilon(\bm{q}_1), \cdots , \epsilon(\bm{q}_{N_y}) \right)$ is a multivariate normal random variable with zero mean and a covariance matrix $\bm{C}$ with entries $C_{ij}$ that depend only on $\|\bm{q}_i-\bm{q}_j\|$.
Hence the vector of observations $\bm{y} \equiv \left( y(\bm{q}_1), \cdots , y(\bm{q}_{N_y})\right)$ conditioned on $x$ is a multivariate normal random variable with mean $H(\bm{x})$ and covariance
\begin{align*}
	\mathbf{R} = \mathbf{R}_0^{1/2}\mathbf{C}\mathbf{R}_0^{1/2},
\end{align*}
where $\bm{R_0}^{1/2}$ is a diagonal matrix of the discrete observation standard deviations $r^{1/2}(\bm{q}_i)$ that can be treated as instrument errors and $H(\cdot) : \mathbb{R}^{N_y} \rightarrow \mathbb{R}^{N_y}$ is an observation operator acting on the discrete vector $\mathbf{x}$ that characterizes the underlying system state.
The discrete observing system can be summarized in the form
\begin{align}
    \mathbf{y} = \mathbf{H} \mathbf{x} + \bm{R}_0^{1/2} \bm{\epsilon} \in \mathbb{R}^{N_y}.
\end{align}

In the spirit of \cite{LRL11,RH05}, a connection between elliptic stochastic partial differential equations and random fields enables us to make use of fast algorithms for PDEs in the context of solving for the likelihood under GGRF models, rather than naively developing a dense approximation of $\mathbf{C}$ and then solving the associated linear system.
In seeking to build a GGRF error model with a likelihood that is cheap to evaluate in high dimensions, it was this connection that led to the development of the more generally-applicable blurring algorithm presented in this paper.

We specifically treat the continuous field of observation error as $\epsilon=\mathscr{D}\mathcal{W}$, where $\mathscr{D} = (1-\ell^2 \Delta)^{\beta}$ acts on a spatial white noise $\mathcal{W}$ with mean zero and unit pointwise variance.
As in \cref{eq:diff_op}, $\Delta$ is the formal Laplacian operator, $\ell>0$ is a tuning parameter with dimensions of length, and $\beta>0$ is a dimensionless tuning parameter that controls the rate of growth of eigenvalues.
Recall that eigenfunctions of $\mathscr{D}$ are Fourier modes of wavenumber $k$ and corresponding eigenvalues $(1+\ell^2|k|^2)^{\beta}$.
Recall also that the characteristic scale of this operator is $\ell/(2\pi \sqrt{2^{1/\beta}-1})$, in the sense that eigenfunctions with length scales longer than this have corresponding eigenvalues close to 1.
Modeling the observation error $\epsilon$ in this manner therefore ascribes a variance to large scales that is commensurate with instrument error, but it also progressively and unboundedly inflates variance for small scales at a rate controlled by $\beta$.
The GGRF description of observation error is thus a kind of surrogate model for the assumption of uncorrelated observations at large scales, but with inflated variance at small scales that are of lesser concern in geophysical forecasting.

We will use the fact that preferentially inflating observation variance at small scales is equivalent to treating blurred innovations as uncorrelated.\footnote{The \emph{innovation} of an ensemble member is the difference between observation and forecast.}
To see this equivalence, observe how the correlation matrix features in the Gaussian likelihood, the logarithm of which is proportional to
\begin{align}
    (\mathbf{y} - \mathbf{H}\mathbf{x})^T \mathbf{R}_0^{-1/2} \mathbf{C}^{-1} \mathbf{R}_0^{-1/2} (\mathbf{y} - \mathbf{H}\mathbf{x}).
\end{align}
Then consider preprocessing the ``standardized innovations'' $\mathbf{R}_0^{-1/2} (\mathbf{y} - \mathbf{H}\mathbf{x})$ with a linear operation 
\begin{align}
    \mathbf{R}_0^{-1/2} (\mathbf{y} - \mathbf{H}\mathbf{x}) ~~ \longmapsto ~~ \mathbf{S} \mathbf{R}_0^{-1/2} (\mathbf{y} - \mathbf{H}\mathbf{x}).
\end{align}
If these blurred observations are now assimilated under the assumption that the errors in the blurred field are standard normal, then the log-likelihood is proportional to
\begin{align}
    (\mathbf{y} - \mathbf{H}\mathbf{x})^T \mathbf{R}_0^{-1/2} \mathbf{S}^T\mathbf{S} \mathbf{R}_0^{-1/2} (\mathbf{y} - \mathbf{H}\mathbf{x}).
\end{align}

Any valid covariance matrix must be symmetric and positive definite.
If $\mathbf{S}$ is a positive definite blurring operator --- i.e.~a positive definite operator with a decaying spectrum toward small scales --- then $\mathbf{C} = (\mathbf{S}^T \mathbf{S})^{-1}$ is a symmetric positive definite operator with a spectrum that grows toward small scales.
In light of these considerations, take $\mathbf{S}$ to be the algorithm presented in \cref{sec:method}.
Although \cref{thm:pos-def} shows that $\mathbf{S}$ is positive definite, it is typically \emph{not} symmetric, and the symmetric part of $\mathbf{S}$ is not necessarily positive definite.
For this reason we must treat $\mathbf{R}_0^{1/2}(\mathbf{S}^T\mathbf{S})^{-1}\mathbf{R}_0^{1/2}$ as a covariance matrix, rather than $\mathbf{R}_0^{1/2}\mathbf{S}^{-1}\mathbf{R}_0^{1/2}$.

In summary, we propose blurring standardized innovations with $\mathbf{S}$ in such a way that $(\mathbf{S}^T\mathbf{S})^{-1}$ is a covariance for a discretized GGRF.

\section{Example 3: SIR with radiosonde data}
\label{sec:ex-radiosonde}

To demonstrate the behavior of our blurring algorithm on scattered data and its impact on SIR weights, we make use of data from the U.S.~National Center for Atmospheric Research (NCAR) Convection Allowing Ensemble \cite{SRSFW15,SRSFW19}.
The NCAR ensemble produced real-time 48 hour forecasts over the conterminous United States (CONUS) from April 7, 2015 to December 30, 2017.
The ensemble forecasting system consisted of two components: an 80 member ensemble assimilation system operating at 15 km resolution and a 10 member ensemble forecast system operating at 3 km resolution.
We make use of the 80 member ensemble data.
The assimilation system used the Advanced Research version of the Weather Research and Forecasting (WRF) model; observations were assimilated in a 6 hour cycle via the Ensemble Adjustment Kalman Filter \cite{Anderson03} implemented in the Data Assimilation Research Testbed software suite \cite{DART}.
Every assimilation cycle processed between 66,000 and 70,000 observations from a variety of sources including radiosondes, aircraft measurements, satellite wind measurements, and Global Positioning System radio occultation data, among others.
Further details are provided in \cite{SRSFW15}.

To verify that our blurring algorithm performs as expected on scattered data, we apply it to radiosonde temperature measurements at a single pressure level.
Every 12 hours, i.e.~every other assimilation window, there are between 90 and 97 radiosonde measurements scattered across North and Central America and the Caribbean available at various pressure levels.
An example of the locations of these observations at a pressure level of 70 kPa on May 15, 2017 is shown in Fig.~\ref{fig:radiosonde}.
The left panel shows the locations of the measurements along with an interpolated temperature field obtained using Gaussian RBFs with standard deviation $\xi^{1/2}=5^\circ$.
The right panel shows the result of applying our blurring algorithm with blurring exponent $\beta=1/2$ and blurring length scale $\ell=4^\circ$.
Figure \ref{fig:radiosonde} provides visual evidence that our algorithm indeed blurs scattered data.
Interpolating the raw data would leave strange regions of approximately zero Kelvins in the interpolants depicted.
So for the purpose of visualization, we subtract the mean before applying the blur, and then add the mean back to the blurred data.

\begin{figure}
    \centering
    \includegraphics[width=\columnwidth]{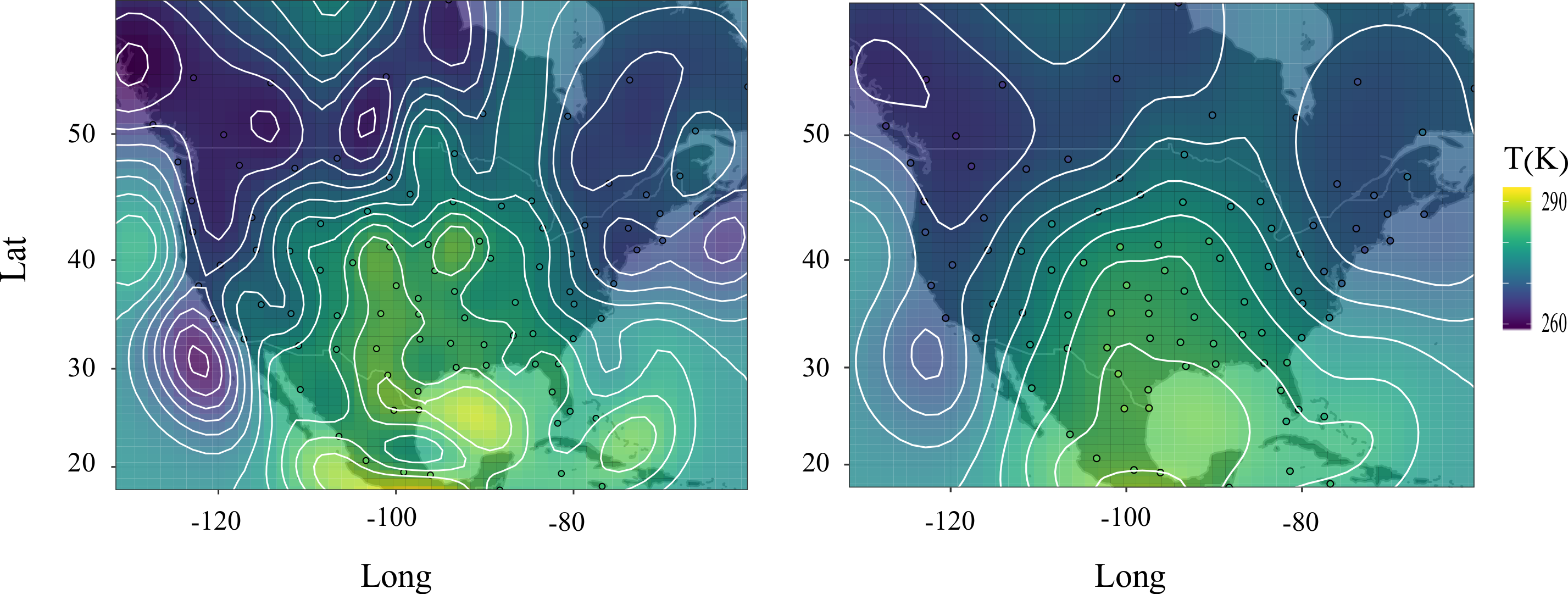}
    \caption{Left: Temperature fields interpolated from radiosonde data measured on May 15, 2017 at a pressure level of 70kPa. Right: the result of applying our blur with parameters $\xi^{1/2}=5^{\circ}$, $\beta=1$, and $\ell=4^\circ$. The shape of the North America is underlaid to give a sense of scale, and small circles indicate measurement locations.}
    \label{fig:radiosonde}
\end{figure}

We next verify that the algorithm has a controllable degree of blurring with the desired effect on SIR weights, viz. that the effective sample size increases as the blurring length scale $\ell$ increases. 
To that end we use the 80 member ensemble forecast for temperature at the locations of the radiosonde temperature observations, assume that the forecast weights are all equal to $1/80$, and update the weights based on mismatch to the observations using the standard SIR update formula.

To be precise, let $\mathbf{y}$ be the vector of radiosonde temperature observations at a given time, let $\mathbf{Hx}^{(i)}$ be the vector of forecast temperatures at the same time and locations for ensemble member $i$, and let $\mathbf{R}_0^{1/2}$ be a diagonal matrix whose diagonal contains the standard deviations of the observation errors.
The un-normalized weight for the $i^\text{th}$ ensemble member is
\begin{equation}
    \overline{w}_i = \text{exp}\left\{-\frac{1}{2\sigma}\left(\mathbf{y}-\mathbf{H}\mathbf{x}^{(i)}\right)^T\mathbf{R}_0^{-1/2}\mathbf{S}^T\mathbf{SR}_0^{-1/2}\left(\mathbf{y}-\mathbf{H}\mathbf{x}^{(i)}\right)\right\}
\end{equation}
where $\mathbf{S}$ is the matrix corresponding to the blurring operator and $\sigma = \|\bm{S1}\|^2$ is a rescaling factor with $\bm{1}$ a unit vector of identical entries.
The normalized weights are
\begin{equation}
    w_i = \frac{\tilde{w}_i}{\sum_{j=1}^N\tilde{w}_j}
\end{equation}
and the effective sample size (ESS) is
\begin{equation}
    \text{ESS }=\frac{1}{\sum_{i=1}^Nw_i^2}.
\end{equation}

\begin{figure}
    \centering
    \includegraphics[width=\columnwidth]{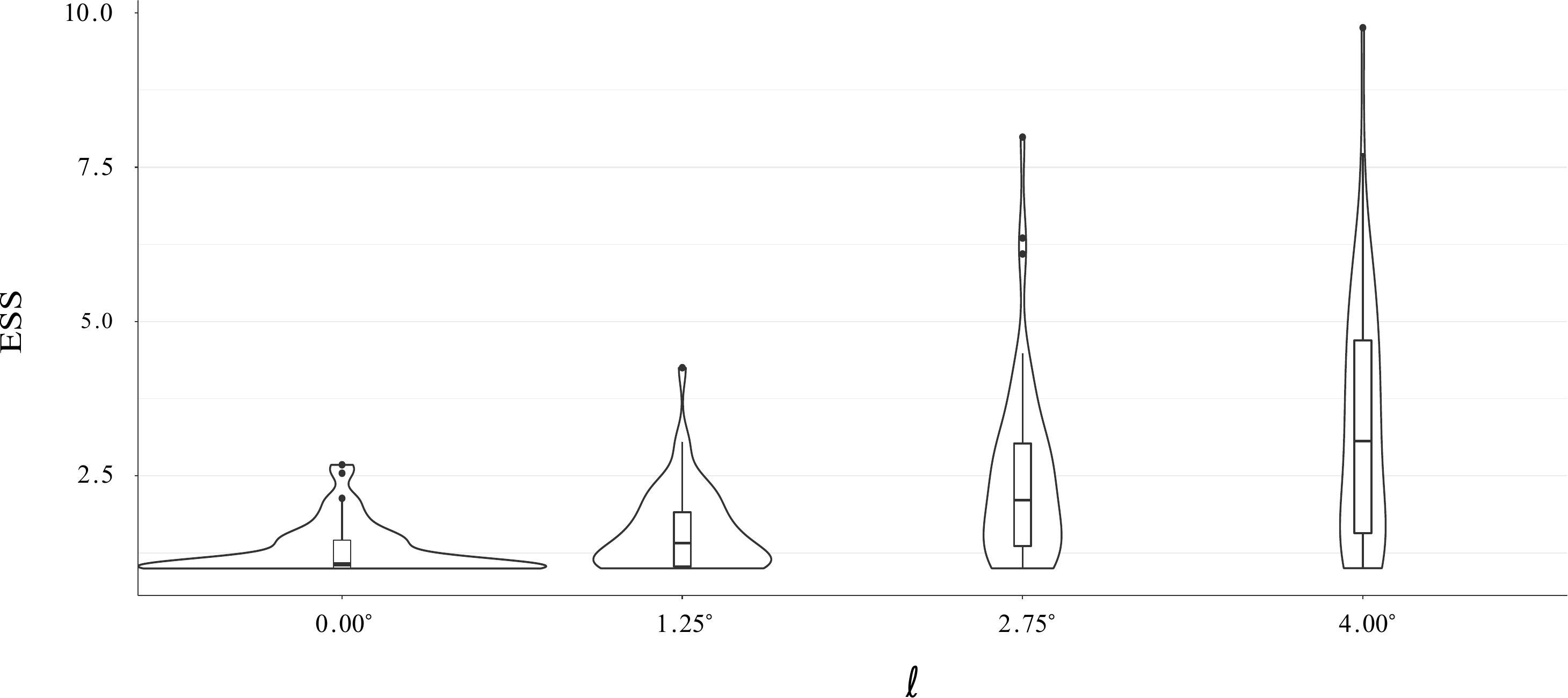}
    \caption{Distributions of effective sample sizes, for different values of blurring length scale $\ell$, observed for the posterior ensembles on radiosonde temperature.
    The posterior ensembles are obtained by performing an importance sampling update on the WRF forecast as a prior ensemble, using actual radiosonde temperature measurements at an atmospheric pressure level of 50~kPa.
    The distributions in this plot depict ESS values for SIR weights computed twice daily over the entire month of May 2017.
    }
    \label{fig:ess}
\end{figure}

Figure \ref{fig:ess} shows violin plots of the ESS computed twice daily over the entire month of May 2017 using radiosonde temperature data at a pressure level of 50 kPa.
The standard particle filter (i.e. without blurring, or $\ell=0$) exhibits very poor performance with ESS only rarely rising much beyond 1.

As the blurring length scale $\ell$ increases from 0, the distribution of ESS also increases. With $\ell=4^\circ$, the median ESS is approximately 3 and ESS occasionally rises beyond 5.
We emphasize that these values of ESS are still quite small for an 80 member ensemble, but the goal here has not been to demonstrate the performance of the particle filter per se, but of the blurring algorithm for scattered data.
That said, even a tiny increase of the ESS beyond its minimum possible value of 1 is promising because it offers hope that uncertainty quantification will improve faster upon increasing the ensemble size.

\section{Computational considerations}
\label{sec:discussion}
The first step of our blur involves solving a dense linear system \cref{eq:rbf} for interpolation weights $\bm{b}$.
A naive approach to doing so would require $\mathcal{O}(N_y^2)$ storage and $\mathcal{O}(N_y^3)$ operations.
This would be highly undesirable, especially if the interpolation matrix $\bm{B}$ changes between assimilation cycles due to changing observation locations.
Happily, there exist algorithms to solve a Gaussian RBF problem much faster.
These notably include PetRBF, which is based on GMRES iteration with a restricted additive Schwarz method preconditioner.
PetRBF requires $\mathcal{O}(N_y)$ storage and $\mathcal{O}(N_y)$ operations in arbitrary dimension $d$, and is scalable to many cores as implemented in PETSc \cite{YBK10}.

Another potential bottleneck is evaluating the sum of $N_yM$ Gaussians at $N_y$ target locations, with $M=M_-+M_++1$ terms in the approximation of $g$.
A direct approach to evaluating this sum, as is written in Algorithm 2.2, has time complexity $\mathcal{O}(N_y^2M)$.
This Gaussian sum approximation can be reduced to $\mathcal{O}(N_y+N_yM)$ with the Fast Gauss Transform (FGT) \cite{GS91}.
The FGT has exponential time complexity in location dimensionality $d$, so it often runs slower than direct evaluation when $d$ is greater than 2.
The Improved Fast Gauss Transform (IFGT) exists to eliminate that exponential scaling that hinders the original FGT for large $d$ \cite{YGDG03}.
The IFGT can be challenging to use in practice, but there exist approaches to assist in automatic tuning such as \cite{MS09}.
Finally, ASKIT offers a new approach to the kernel summation problem that extends to non-Gaussian kernels and which may be less fragile in high dimensions \cite{march2015askit}.

Our algorithm uses an interpolation basis function whose width remains fixed throughout the domain.
This may be problematic when the density of observation locations is highly heterogeneous, since an RBF standard deviation $\xi$ large enough to resolve smooth features in a sparsely-sampled region may be large enough that it causes numerical problems in densely-sampled regions.
Those numerical problems may arise from ill-conditioning of $\bm{B}$ or from insufficient data locality expected of some divide-and-conquer solvers like PetRBF.
There is some extant literature on the use of nonuniform RBF width parameters to address this situation \cite{FZ07}, so that the size of the basis function can adapt to the density of observations.
Using adaptive width involves interpolation with a basis $\{\psi_i\}$ that is allowed to vary with $i$.
Adaptive width can be incorporated into our blur, just by modifying \cref{eq:rbf} and \cref{eq:conv} to let $\xi$ vary with $i$.
Using nonuniform width parameters no longer comes with guaranteed nonsingularity, but \cite{FZ07} suggests that singularity is more of an exception than a rule.

Unfortunately, many fast solvers for the RBF problem are incompatible with basis functions that vary by location.
One possibility to reduce the cost of solving for interpolation weights in this case is to choose compactly-supported basis functions $\psi_i$ so that $\bm{B}$ is sparse.
We are unaware of any compactly-supported radial basis functions with positive Fourier transforms that are simple to convolve with a Gaussian, particularly for arbitrary $d$.
But performing the interpolation in terms of compactly-supported bases $\psi_i$ can be made compatible with the rest of our blurring method, simply by approximating each $\psi_i$ with a sum of Gaussians.
The resulting Gaussian approximation of the data will not be an interpolant, but careful construction can make it accurate.
Therefore \cref{thm:pos-def} does not apply, but we can still expect this substitution to yield a good approximation of the convolution acting on the original interpolant.

It is similarly possible to choose a different convolution kernel $g$ to approximate with a sum of Gaussians.
This idea can be used to implement a blur of the form presented here with a wider variety of characteristics, such as a non-monotonic response in length scale.
If the Guassian approximation kernel possesses a positive Fourier transform, and the RBF interpolation employs a uniform basis function, then \cref{thm:pos-def} still applies to guarantee that $\mathbf{S}^T\mathbf{S}$ is a valid covariance matrix in its application shown in \cref{sec:ssir}.

To reduce the $M$ prefactor in the convolution step, we can apply a reduction algorithm based on Prony's method with the suboptimal approximation \cref{eq:mclean-disc} as a starting point \cite{BM10}.
Doing so yields an optimal multiresolution approximation of the integral kernel for given uniform relative error bounds, which may require substantially fewer terms to attain the same relative accuracy.
This reduction method may be particularly helpful for different forms of $\mathscr{D}$ (ergo $g$) that do not yield such a rapidly-convergent approximation as \cref{eq:green-approx}.


\section{Conclusions}
\label{sec:conclusions}
We have a described a method to blur data measured at $N_y$ locations that are arbitrarily scattered in $\mathbb{R}^d$, for arbitrary $d$, by applying a discrete approximation to the integral equation that inverts the fractional bound state Helmholtz operator $(1-\ell^2\Delta)^\beta$.
The degree of attenuation for different length scales can be tuned by adjusting the parameters $\ell>0$ and $\beta>0$; large scales are attenuated little, but length scales shorter than $\ell/(2\pi \sqrt{2^{1/\beta}-1})$ are rapidly suppressed with a rate determined by $\beta$.

The discrete approximation results from a multiresolution Gaussian approximation to the differential operator's Green's function.
This readily permits convolution with a sum of Gaussians that approximate the data; we take the sum of Gaussians approximation to be a radial basis function (RBF) interpolant with a Gaussian kernel.
The blur is shown to be a positive definite linear operator on $\mathbb{R}^{N_y}$ in a more general context where the interpolation basis and the convolution kernel have positive Fourier transforms.

Spectral properties of our blur are examined with an example shown in \cref{sec:ex1}.
This example provides evidence that the algorithm operates as expected: attenuation gradually increases in wavenumber, roughly approximating the differential operator's inverse spectrum, with a caveat that our blur attenuates even the largest scale.
In order to preserve large scales, we propose dividing $\bm{S}$ by $\|\bm{S1}\|$, where $\bm{1}$ is a unit vector with all entries identical.

\Cref{sec:scale-separation} shows an example application of our blurring algorithm on sea surface temperature data measured by oceanographic floats in the Argo project.
In contrast to the typical approach of decomposing these data into large-scale and small-scale components, our method circumvents the need for interpolating onto a regular grid.

Our blur is developed with application to Sequential Importance Sampling with Resampling (SIR) particle filters in mind.
\Cref{sec:ssir} describes how blurring observations with $\mathbf{S}$ before assimilating them as if they have uncorrelated errors is equivalent to assuming that the observation errors have covariance $(\mathbf{S}^T\mathbf{S})^{-1}$, which gives observation errors the correlation structure of a stationary generalized Gaussian random field.
Relative to an uncorrelated model, an observation error model of this type decreases the number of ensemble members required to achieve good uncertainty quantification from SIR for spatially-extended dynamical systems \cite{RGK18}.

\Cref{sec:ex-radiosonde} demonstrates that this blur has the desired effect of helping balance SIR weights in an example with real meteorological data, which improves uncertainty quantification by reducing the tendency of SIR to produce underdispersed posterior distributions in high dimensions.
This example is chosen to be provocative of potential future applications to geophysical fluid dynamics, but it is worth characterizing traits of applications that would be more appropriate.
The extratropical temperature field in \cref{sec:ex-radiosonde} probably features little dynamical nonlinearity at large scales, and its measurements are linear and Gaussian, so this corpus of data is an excellent candidate for assimilation with any one of the many variants of the Ensemble Kalman Filter.
A more appropriate application of blurred-observation SIR would feature substantially non-Gaussian behavior at large scales.
That can arise due to nonlinear dynamics of large scales or due to large dispersion of a non-negative state variable relative to its mean, or due to a nonlinear observation operator inducing a non-Gaussian posterior distribution.
Moist convective systems, for example, have nonlinear dynamics and substantially skewed sign-definite variables.
Examples of nonlinear observation operators that could be similar motivation for SIR include satellite radiance and precipitation measurements.
Any of these features could provide motivation for accepting the computational challenge of SIR in exchange for provable convergence to the smoothed-observation surrogate model with its controllable bias.
However, it remains to be seen whether a smoothed-observation SIR filter can beat methods in the EnKF family when applied to real atmospheric problems.


A naive implementation of Algorithm 2.1 requires $\mathcal{O}(N_y^2)$ memory and $\mathcal{O}(N_y^3)$ operations to solve for interpolant weights.
However \cref{sec:discussion} describes how specialized kernel matrix solvers and fast kernel summation methods can reduce the asymptotic complexity of our algorithm to $\mathcal{O}(N_y)$.

\section*{Acknowledgments}
We are grateful to Jeff Anderson and Glen Romine for their help in accessing and using the NCAR ensemble data. We are grateful to Gregory Beylkin for pointing us towards a multiresolution Gaussian approximation of the blurring kernel.
The Argo data used here (\url{http://doi.org/10.17882/42182#70590}) were collected and made freely available by the International Argo Program and the national programs that contribute to it (\url{http://www.argo.ucsd.edu},  \url{http://argo.jcommops.org}).  The Argo Program is part of the Global Ocean Observing System.

\bibliographystyle{siamplain}
\bibliography{references}
\end{document}